\newenvironment{bmcformat}{\fussy\setboolean{publ}{true}}{\fussy}
\renewcommand{\paragraph}[1]{\par\noindent\textbf{#1}}
\newcommand{\problem}[1]{\begingroup\bfseries\sffamily#1\endgroup}
\newtheorem{lemma}{Lemma}
\newtheorem{theorem}{Theorem}
\newtheorem{corollary}{Corollary}
\begin{document}
\begin{bmcformat}
\title{Maximizing Output and Recognizing Autocatalysis in\newline 
Chemical Reaction Networks is NP-Complete}

\author{Jakob Lykke Andersen$^{1}$%
  \email{jakan06@student.sdu.dk}%
  \and
  Christoph Flamm$^{2}$%
  \email{CF\correspondingauthor:xtof@tbi.univie.ac.at}%
  \and
  Daniel Merkle$^{1}$%
  \email{DM\correspondingauthor:daniel@imada.sdu.dk}%
  \and
  Peter F.\ Stadler$^{2-7}$%
  \email{PFS:studla@bioinf.uni-leipzig.de}%
}

\address{%
  \iid(1) Department for Mathematics and Computer Science, 
  University of Southern Denmark, Campusvej 55, DK-5230 Odense M, Denmark
  \iid(2) Institute for Theoretical Chemistry, University of Vienna, 
  W{\"a}hringerstra{\ss}e 17, A-1090 Wien, Austria.
  \iid(3) Bioinformatics Group, Department of Computer Science,
  and  Interdisciplinary Center for Bioinformatics,
  H{\"a}rtelstra{\ss}e 16-18, D-04107, Leipzig, Germany.
  \iid(4) Max Planck Institute for Mathematics in the Sciences,
  Inselstra{\ss}e 22 D-04103 Leipzig, Germany.
  \iid(5) Fraunhofer Institute for Cell Therapy and Immunology,
  Perlickstra{\ss}e 1, D-04103 Leipzig, Germany.
  \iid(6) Center for non-coding RNA in Technology and Health,
  University of Copenhagen, Gr{\o}nneg{\aa}rdsvej 3, DK-1870 
  Frederiksberg C, Denmark.
  \iid(7) Santa Fe Institute, 1399 Hyde Park Rd, Santa Fe, NM 87501, USA
}%

\maketitle


\begin{abstract}
  \paragraph{Background:} 
  A classical problem in metabolic design is to maximize the production of
  desired compound in a given chemical reaction network by appropriately
  directing the mass flow through the network. Computationally, this
  problem is addressed as a linear optimization problem over the ``flux
  cone''. The prior construction of the flux cone is computationally
  expensive and no polynomial-time algorithms are known.

  \paragraph{Results:} 
  Here we show that the output maximization problem in chemical reaction
  networks is NP-complete. This statement remains true even if all
  reactions are monomolecular or bimolecular and if only a single molecular
  species is used as influx. As a corollary we show, furthermore, that the
  detection of autocatalytic species, i.e., types that can only be produced
  from the influx material when they are present in the initial reaction
  mixture, is an NP-complete computational problem.

  \paragraph{Conclusions:}
  Hardness results on combinatorial problems and optimization problems are
  important to guide the development of computational tools for the
  analysis of metabolic networks in particular and chemical reaction
  networks in general. Our results indicate that efficient heuristics and
  approximate algorithms need to be employed for the analysis of large
  chemical networks since even conceptually simple flow problems are
  provably intractable.
\end{abstract}

\ifthenelse{\boolean{publ}}{\begin{multicols}{2}}{}


\section*{Background}

\begin{figure*}
\begin{center}
\includegraphics[width=\textwidth]{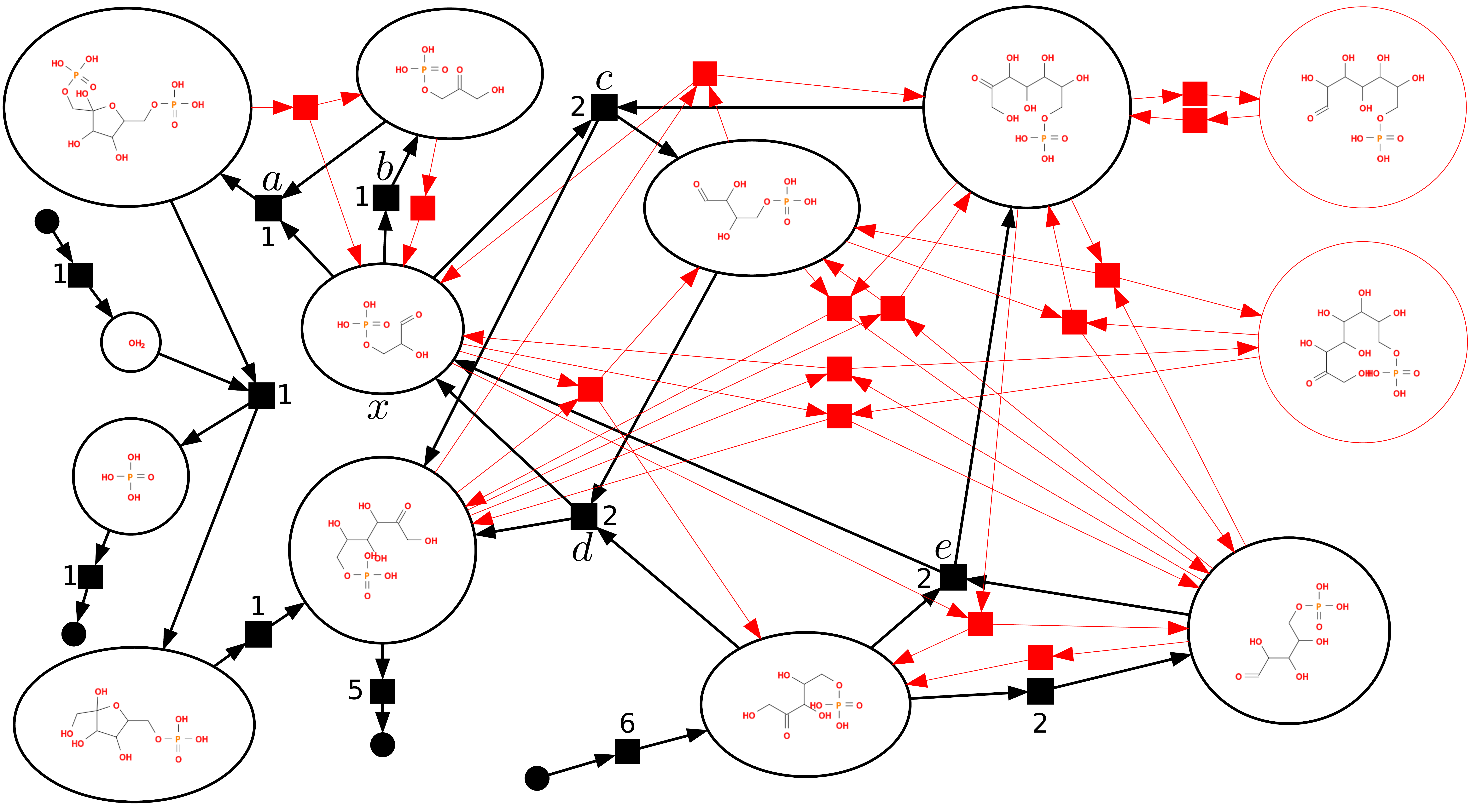}
\end{center}

\caption{Flow optimization in the pentose-phosphate reaction network.  Only
  a small part of the chemical space is shown. We allow influx of water
  $H_2O$ and ribulose-5-phosphate to generate glucose-6-phosphate as
  output. Phosphate is produced as waste product. An optimal solution is
  shown in black, using 6 ribulose-5-phosphate molecules to produce 5
  glucose-6-phosphate molecules. The values of the flow $f(\,.\,)$ is
  indicated for each hyperedge (black square), e.g., $f(a)=1$, $f(b)=1$,
  $f(c)=2$, $f(d)=2$, $f(e)=2$. At each node (except the unlabelled input
  and output nodes) the influx and outflux is balanced.  For example, at
  node $x$ (glycerol-3-phosphate), we have $f(d)+f(e)=4=f(a)+f(b)+f(c)$.}
\label{fig:pentose}
\end{figure*}

Networks of chemical reactions lie at the heart of ``systems approaches''
in chemistry and biology. After all, metabolic networks are merely
collections of chemical reactions entrenched by enzymes that favor some
possible reactions over physiologically undesirable side reactions.  A
detailed understanding of their aggregate properties thus is a prerequisite
to efficiently manipulating them in technical applications such as
metabolic engineering and at the same time form the basis for deeper
explorations into their evolution. Due to the size of reaction networks of
practical interest, efficient algorithms are required for their
investigation.

Chemical reaction networks cannot be modeled appropriately as graphs
despite the many attempts in this direction \cite{Bernal:11}. Instead, they
are canonically specified by their stoichiometric matrix $\mathbf{S}$,
augmented by information on catalysts. Equivalently, a collection of
chemical reactions on a given set of compounds forms a directed
(multi)-hypergraph \cite{Zeigarnik:00a}. As a consequence, most of
computational problems associated with chemical reaction networks cannot be
reformulated as well-studied graph problems and hence require the
development of a dedicated theory and corresponding algorithmic
approaches. Mathematical structures similar to the directed hypergraphs
arising in chemistry were also explored in a theoretical economics setting
\cite{Gallo:98,Ausiello:01}.

Two complementary approaches to analyzing chemical reaction networks have
been developed mostly in the context of analyzing and manipulating
metabolisms. Flux Balance Analysis (FBA) is concerned with the distribution
of steady-state reaction fluxes that optimize a biological objective
function such as biomass or ATP production \cite{Kauffman:03}. The
objective of metabolic design is to manipulate fluxes through a metabolic
networks so as to maximize the production of a (commercially important)
substance \cite{Hatzimanikatis:98}. More details on the structure of a
(metabolic) reaction network, on the other hand, is obtained my means of
elementary mode analysis \cite{Schuster:94}. Both approaches are concerned
with stationary mass flows through the network, mathematically given as
solution of $\mathbf{S}\vec v$, subject to the condition that flux $v_i$
through every reaction is non-negative. The elementary flux modes (EFMs)
are the extremal rays of this convex cone $\mathfrak{C}$ and can be
interpreted as a formalization of the concept of a ``biochemical pathway''
\cite{Schuster:00,Klamt:03}. FBA adds a (typically linear) objective
function to be optimized over $\mathfrak{C}$.  A major drawback of
EFM-based approaches is the combinatorial explosion of EFMs in large
networks \cite{Klamt:02} and the fact that the knowledge of EFMs does not
directly elucidate the metabolic capabilities of the given network.  An
interesting recent approach thus combines FBA with the computation of a
subset of EFMs using a greedy-like procedure \cite{Ip:11}.

Over the last years, there has been increasing interest in the
computational complexity of questions related to EFMs. For example, an
elementary flux mode can be found and counted in polynomial time
\cite{Acuna:09}.  In contrast, the question whether there is a
``futile cycle'', i.e., an EFM without input or output (equivalently,
a sub-hypergraph in which in-degree and out-degree balance for all
vertices \cite{Zeigarnik:00a}), is NP-complete
\cite{Ozturan:08}. Similarly, finding EMFs that contain two prescribed
reactions is NP-hard \cite{Acuna:10}.  A collection of reactions is a
reaction cut set for a given reaction if, after removing the cut set,
the network contains no longer an EFM containing the target reaction
\cite{Klamt:04,Klamt:06}. The problem of finding minimum cardinality
reaction cut sets is also NP-complete \cite{Acuna:09}.  The complexity
of enumerating all EFMs is still unknown \cite{Acuna:10}. In
\cite{Pitkaenen:05}, the problem of finding a shortest metabolic
pathway connecting a set of source metabolites with a desired product
is shown to be NP-hard even if stoichiometric coefficients are
neglected.

An alternative approach to analyzing the structure of chemical reaction
networks is to decompose them into a hierarchy of algebraically closed and
self-maintaining sub-networks, called chemical organizations
\cite{Kaleta:06,Centler:08,Kaleta:09,Benkoe:09a}. As shown in
\cite{Centler:08}, it is also an NP-hard problem to determine whether there
is a a given reaction network contains a non-trivial organization.

In this contribution we focus on a class of computational problems in
chemical network analysis that involve questions relating to both pathways
and organizational aspects. The problem of of maximizing production of a
desired collection of output species (rather minimizing cardinality of
reaction sets) is central to metabolic engineering \cite{Domach:04}, see
Figure~\ref{fig:pentose} for an example. In contrast to flow problems on
simple graphs \cite{Ahuja:93}, we show here that hypergraph versions
describing fluxes in chemical reaction networks are computationally
hard. As a computational problem, this flow maximization problem is closely
related to the issue of finding autocatalytic intermediates in a reaction
network. The latter problem has received considerable attention in recent
years since such ``metabolic replicators'' are universally found in
present-day metabolic networks and and likely represent their ancient
ancestral cores \cite{Kun:08}. We show here that detection of autocatalysts
is NP-hard in its general version, although a related problem in the
setting of replicator-like networks admits a polynomial-time solution
\cite{Hordijk:04}.

\section*{Result: NP-hardness}

\subsection*{Definitions}

In the following paragraphs we formally introduce chemical reaction
networks. We emphasize that our setup is the same as in the literature on
flux analysis; we have opted, however, for a somewhat different notation
that is closer to the conventions commonly used in graph theory as this
makes the subsequent discussion more concise.

A \emph{chemical reaction network} (CRN) is represented a directed
multi-hypergraph $G(V,E)$ consisting of a vertex set $V$, the compounds,
and a set $E$ of directed hyper-edges encoding the reactions
\cite{Zeigarnik:00a}. Each reaction $e\in E$ is a pair $(e^-, e^+)$ of
multisets $e^-,e^+\subseteq V$ of compounds, denoting the educts and
products of the reaction $e$. The stoichiometric coefficients $s_{x,e^-}$
and $s_{x,e^+}$ are represented by the multiplicity of the compounds in the
multisets.  For instance, the hyperedge encoding
$$C_2H_2 + 2H_2O \to (CH_2OH)_2$$
reads 
$$(\{ C_2H_2, H_2O, H_2O\},\{ (CH_2OH)_2 \})$$
Reversible reactions are encoded by a pair of forward and backward
reaction. The entries of the stoichiometric matrix are recovered as
$\mathbf{S}_{x,e}=s_{x,e^+}-s_{x,e^-}$.

In addition to the ordinary reactions like the one above, CRNs also contain
pseudo-reactions $E'$ representing influx and outflux of compounds of the
form $e_{in(x)}=(\{x_{in}\},\{x\})$ and $e_{out(x)}=(\{x\},\{x_{out}\})$
where $x_{in}$ and $x_{out}$ refer to external reservoirs. These are
additional vertices $V'$ distinct from $V$. These pseudoreactions feed the
CRN and remove ``waste products'' and extract a desired output. In
particular, the $x_{in},y_{out}\in V'$ do not take part in any other
reaction.

A flow on the directed hypergraph $G$ is a function $f:E\cup
E'\to\mathbb{N}_0$ such that, for each compound $x\in V$, the condition
\begin{equation}
  \label{eq:balance}
  \sum_{e\in E\cup E'}  f(e) \left( s_{x,e^-} - s_{x,e^+}\right) = 0  
\end{equation}
is satisfied. This condition enforce that the total production and the
total consumption of $x$ is balanced, i.e., the CRN is in a stationary
state.  The total consumption of an input material $x$ is therefore
\begin{equation}
  f(e_{in(x)}) = \sum_{e\in E} f(e)(s_{x,e^-} - s_{x,e^+})
\end{equation}
and the total outflux of a product is 
\begin{equation}
  f(e_{out(x)}) = \sum_{e\in E} f(e)(s_{x,e^+} - s_{x,e^-})
\end{equation}
We say that a species $x$ \emph{is produced} in a network if
$f(e_{out(x)})>0$.

Note that this definition of $f$ naturally generalized the definition of an
(integer) flow on a directed graph with source $x_{in}$ and target
$y_{out}$, see e.g.\ \cite{Ahuja:93}. In \cite{Cambini:97}, a
generalization of equ.(\ref{eq:balance}), although restricted to
hypergraphs with $|e^+|=1$, is considered, where the flows add up to a
vertex-dependent demand term rather than to zero.  In contrast to the usual
setting of flow problems, we have a non-trivial restriction on the capacity
only for the input edge(s), while the values of $f$ are unrestricted for
all other hyperedges.

\subsection*{Formulation of the problems}

\par\noindent\problem{MAX-CRN-Output} Given a chemical reaction network 
with $n$ nodes, of which any subset may have influx or outflux, find a flow
$f$ that maximizes the outflow $f(e_{out(y)})$ to a specified output node
$y_{out}$.

\par\noindent\problem{MAX-CRN($d$)-Output} Given a chemical reaction 
network with $n$ nodes reactions (hyperedges) with in-degree and out-degree
at most $d$, where any subset of vertices may have influx or outflux, find
a flow $f$ that maximizes the outflow $f(e_{out(y)})$ to a specified output
node $y_{out}$.

\par\noindent\problem{MAX-CRN($d$)-Output-1} Given a chemical reaction 
network with $n$ nodes, reactions (hyperedges) with in-degree and
out-degree at most $d$, and a single vertex with influx where any subset of
vertices may have outflux, find a flow $f$ that maximizes the outflow
$f(e_{out(y)})$ to a specified output node $y_{out}$.

\par\noindent\problem{Autocata} Given a chemical reaction network with 
$n$ nodes and one or more input sources, determine whether there is a
source node $x$ such that: 
\begin{enumerate}
\item $x$ cannot be produced from all other source molecules, i.e.,
  for all flows $f$, $f(e_{in(x)})=0$ implies $f(e_{out(x)})=0$;
  and
\item $x$ can be produced in a quantity that is larger than its 
  inflow, i.e., there is a flow $f$ such that 
  $f(e_{out(x)})>f(e_{in(x)})>0$.
\end{enumerate}

\subsection*{Outline}

Formally, NP-completeness is defined for decision problems
\cite{Garey:79}. Optimization problems can be converted into decision
problems by asking whether they admit a solution that is at least as good
as some value. By abuse of language, it therefore makes sense to speak of
an ``NP-complete optimization problem'' instead of using the phrase ``the
decision problem corresponding to our optimization problem is
NP-complete''.

The basic idea of proving that problem $\mathfrak{X}$ is NP-complete is to
find a so-called \emph{reduction} $\rho$ from another problem
$\mathfrak{P}$ that is already known to be NP-complete. The reduction
$\rho$ is an algorithm \emph{with polynomial runtime} that converts any
given instance of $\mathfrak{P}$ into an instance of $\mathfrak{X}$. An
efficient (i.e., polynomial time) algorithm to solve (all instances of)
$\mathfrak{X}$, therefore would also provide an efficient solution for
every instance $P\in \mathfrak{P}$ by simply reducing $P$ to
$\rho(P)\in\mathfrak{X}$ then solving $\rho(P)$. Hence we can conclude that
$\mathfrak{X}$ is a hard problem when a known hard problem $\mathfrak{P}$
can be reduced to it. 

\begin{figure*}
\begin{tabular}{ccc}
\begin{minipage}{0.55\textwidth}
\begin{center}
\includegraphics[width=\textwidth]{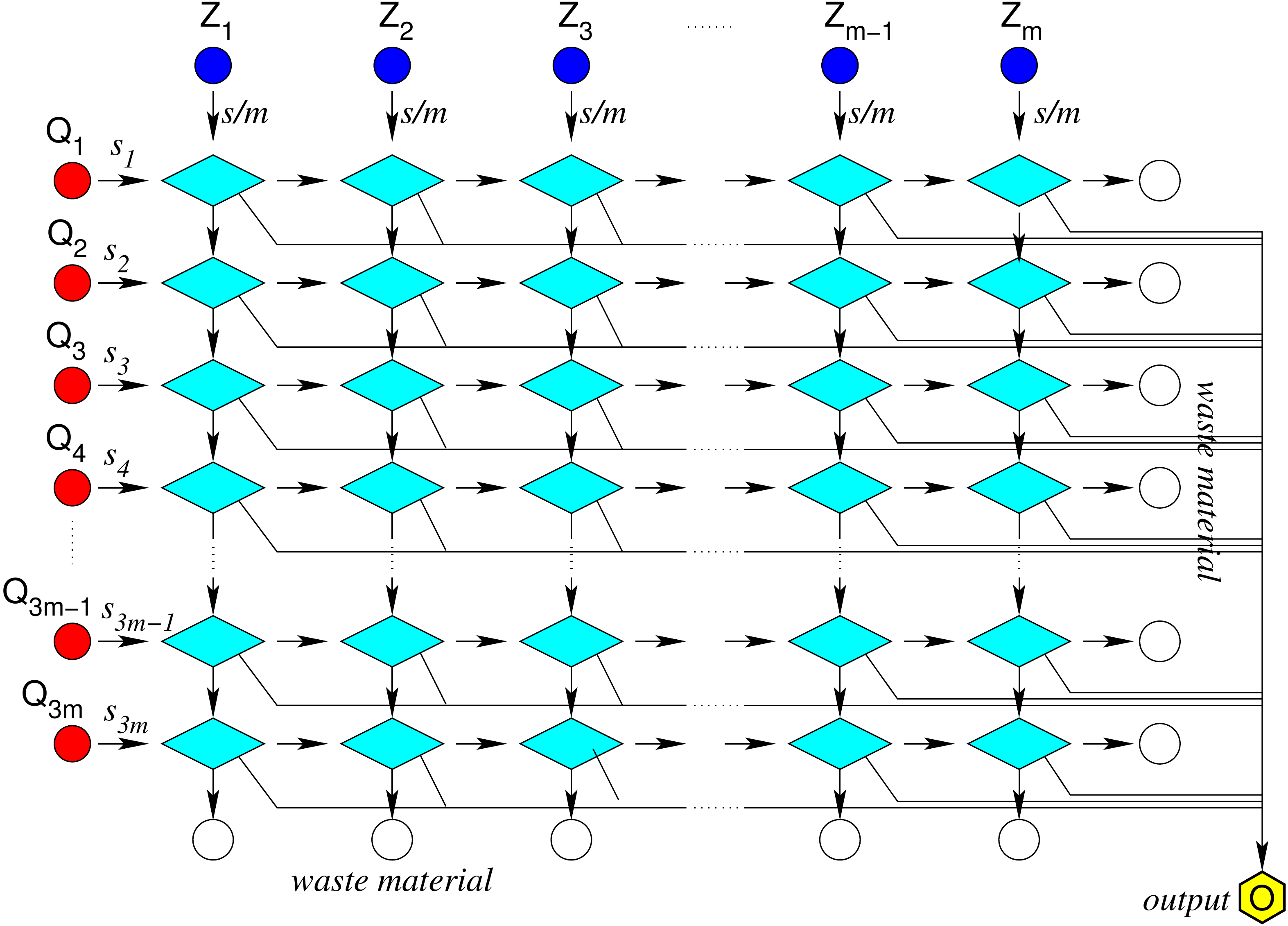}
\end{center}
\end{minipage} && 
\begin{minipage}{0.35\textwidth}
\begin{center}
\includegraphics[width=\textwidth]{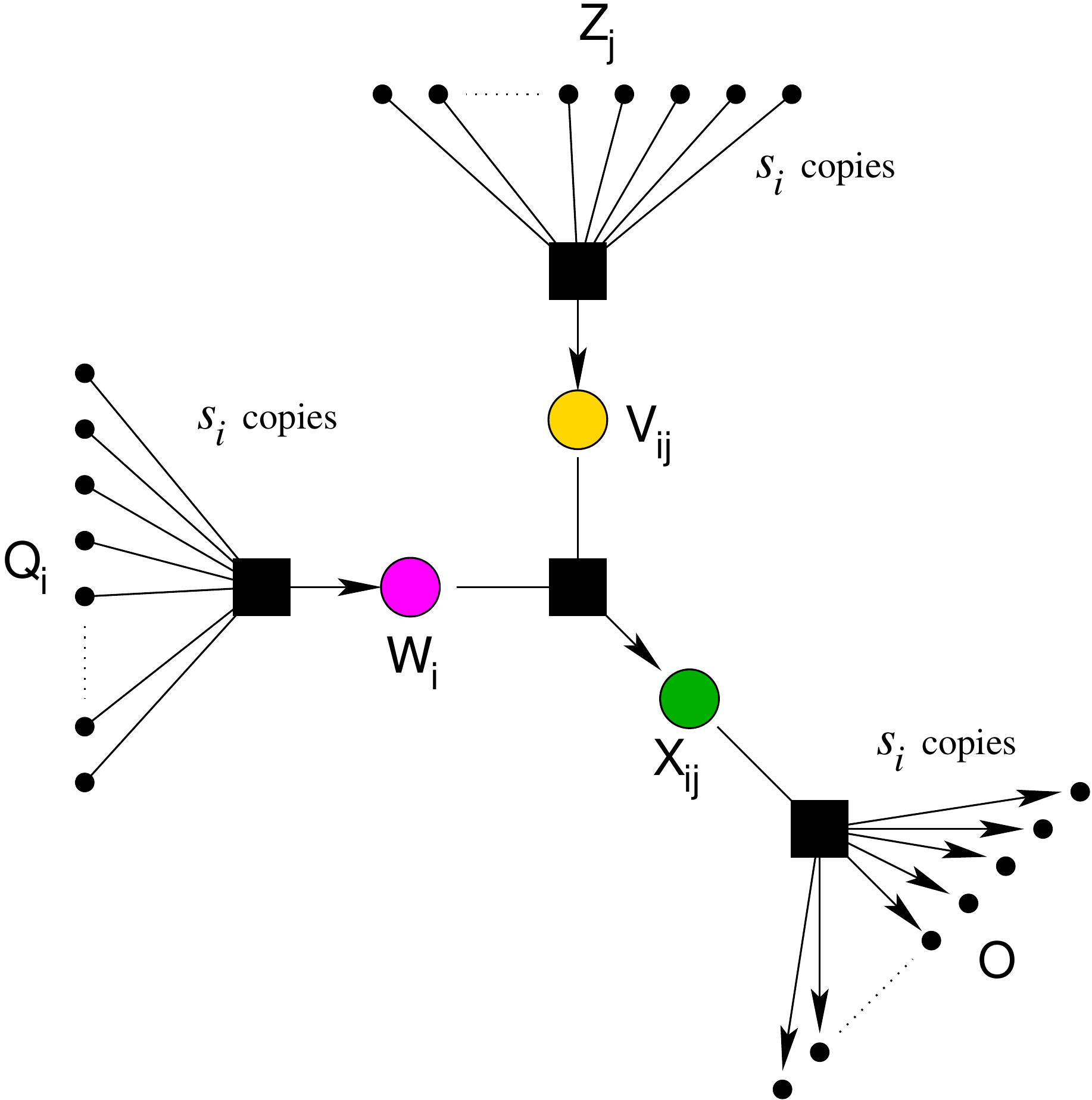}
\end{center}
\end{minipage}\\
(A) & & (B) \\
\end{tabular}
\caption{Construction of a CRN from a given instance of
  \problem{3PART}. (A) In the first step, an intermediate network
  consisting of input nodes, switch nodes (green diamonds), and waste nodes
  (open circles), and a single output sink (hexagon) is constructed. The
  input is encoded as capacity constraint on the l.h.s.\ input nodes
  (corresponding to the input numbers $s_i$ of \problem{3PART} and on the
  $m$ top nodes (corresponding to $1/m$ of the sum of the inputs). A
  solution of \problem{3PART} corresponds to a flow through this network
  that transport $\sum_i s_i$ to the output sink. (B) In the second step,
  each switch node is replaced by reaction network that which admits a
  non-zero flow only if $s_i$ copies of $Q_i$ and $Z_j$ are available. The
  reaction then produces $s_i$ copies of the output molecule $O$. Note that
  the ``drainage reactions'' as not shown in panel (B). These channel the
  $Q_j$ and $Z_j$ input material directly to the ``waste material'' sink
  whenever the reaction networks inside the switch node receives
  insufficient input to produce both $W_i$ and $V_{ij}$.}
\label{fig:lattice}
\vspace*{1cm}
\end{figure*}

In this section we devise a procedure that reduces every instance of the
so-called 3-partition problem to a CRN with a single output pseudo-reaction
in such a way that solving the output maximization problem for the CRN also
solves the 3-partition problem. Thus optimizing output in CRNs is at least
as hard as solving 3-partition. The same basic construction is then
modified to show that the CRN can be built in such a way that all reactions
are monomolecular or bimolecular. We then employ the same construction to
show that problem remains hard even if only a single source is provided. A
simple modification finally establishes the hardness result for finding
autocatalytic compounds.

\subsection*{3-Partition}

The 3-partition problem (\problem{3PART}) consists in deciding whether
a given multiset of $n=3m$ integers $s_i$, $i=1,\dots,3m$ can be
partitioned into triples that all have the same sum. This problem is
one of the most famous strongly NP-complete problems, i.e., it stays
NP-complete even when the numbers in the input instance are given in
unary encoding \cite{Garey:75}, i.e., their values grows not faster
than a polynomial in the problem size $n$. This remains true when the
$s_i$ are distinct \cite{Hulett:08}. If $B$ denotes the desired sum of
each subset then \problem{3PART} remains strongly NP-complete
even if for every integer $B/4 < s_i < B/2$ holds. 

\subsection*{Basic Construction}

Given an instance of \problem{3PART} we construct the associated CRN in a
step-wise fashion. The first step is a lattice-like labeled graph,
Figure~\ref{fig:lattice}(A), that consists of one input node corresponding to
each $s_i$, $m$ auxiliary nodes $Z_j$, each of which has an
influx of $(1/m)\sum_i s_i = s/m$, an output sink node, $3m\times m$ switch
nodes, $3m$ waste nodes at the right and $m$ waste nodes at the bottom.
These switch nodes have two inputs $l$ from the left and $u$ from above,
and three outputs $r$ towards the right, $d$ downwards, and $o$ into the
output channel. Each of the switch nodes can be in one of two distinct
states: either it
\begin{enumerate}
\item[off] The node transmits all its left input to right \textbf{and} all
  its input from above downwards, no flow is then diverted towards the
  output, i.e., $r=l$, $d=u$, $o=0$; or
\item[on] The node consumes its entire input from the left (and thus
  transmits nothing to the right), at the same time uses up a corresponding
  amount of the input from above, and diverts a corresponding amount towards
  the output, i.e., $r=0$, $d=u-l$, $o=l$.
\end{enumerate}
All flux along the output channel is collected in the output node, i.e.,
given a particular state of the switch nodes, the flux into the output node
is the sum of the fluxes consumed from the left. 

\begin{lemma}
  An assignment of ``on'' and ``off'' to the $3m\times m$ switch nodes is a
  solution of the original \problem{3PART} problem if and only if the total
  flow in the output node $O$ equals the maximally possible value $s =
  \sum_i s_i$.
\end{lemma}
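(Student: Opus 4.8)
The plan is to prove both directions by analysing how a fixed on/off assignment determines the flow, and then extracting a combinatorial partition from the diverted amounts. For a given assignment the flow is completely determined by propagating the horizontal flows rightward (starting from $s_i$ in row $i$) and the vertical flows downward (starting from $B=s/m$ in column $j$). Let $o_{ij}$ denote the amount diverted into the output channel at switch $(i,j)$. The first thing I would record are two ``balance'' identities that hold in both switch states. Reading off the rules, the horizontal output satisfies $r=l-o$ (off: $o=0,\ r=l$; on: $o=l,\ r=0$), so telescoping along row $i$ gives $\sum_j o_{ij}=s_i-R_i$, where $R_i\ge 0$ is the flow reaching the right waste node. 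The crucial, slightly less obvious identity is that the vertical flow consumed at a switch equals the diverted amount: $u-d=o$ in both states (off: $u-u=0=o$; on: $u-(u-l)=l=o$). Telescoping down column $j$ therefore yields $\sum_i o_{ij}=B-D_j$, with $D_j\ge 0$ the (necessarily non-negative, by feasibility) flow reaching the bottom waste node.

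These two identities immediately bound the total output $O=\sum_{i,j}o_{ij}$ from two sides: summing the row identities gives $O=s-\sum_i R_i\le s$, and summing the column identities gives $O=mB-\sum_j D_j=s-\sum_j D_j\le s$ (using $mB=s$). Hence $O=s$ is the maximum, and it is attained \emph{if and only if} every $R_i=0$ and every $D_j=0$ simultaneously --- that is, every row diverts its entire input $s_i$ and every column consumes its entire supply $B$.

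Next I would turn the equality case into a set partition. Since a switch in the ``on'' state diverts the whole horizontal flow currently present and sets $r=0$, once a row has diverted all downstream switches in that row receive $l=0$; thus each row diverts its full $s_i$ at a single column, call it $\phi(i)$, with $o_{ij}=s_i$ for $j=\phi(i)$ and $0$ otherwise. The condition $R_i=0$ says $\phi(i)$ is defined for every $i$, and $D_j=0$ reads $\sum_{i:\phi(i)=j}s_i=B$. Therefore the fibres $G_j:=\phi^{-1}(j)$ form a partition of the $3m$ indices into $m$ blocks, each of sum exactly $B$. Invoking the promised constraint $B/4<s_i<B/2$, any two elements sum to less than $B$ and any four to more than $B$, so a block summing to $B$ must contain exactly three elements; hence $\{G_j\}$ is precisely a solution of \problem{3PART}. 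For the converse I would simply reverse this: given a $3$-partition $\{G_j\}$, switch on exactly the nodes $(i,j)$ with $i\in G_j$ (one per row); the vertical flow in each column decreases from $B$ to $B-\sum_{i\in G_j}s_i=0$ and so stays non-negative, the flow is feasible, and the output equals $\sum_i s_i=s$.

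I expect the main obstacle to be the bookkeeping that links the horizontal and vertical constraints, namely the identity $u-d=o$ showing that diverting $s_i$ to the output necessarily ``spends'' $s_i$ of the column's budget $B$; this coupling is what forces the column sums to equal $B$ and hence produces a genuine partition rather than an arbitrary cover. The remaining pieces --- the two-sided counting argument pinning all waste flows to zero, and the elementary cardinality argument from $B/4<s_i<B/2$ --- are routine once this coupling is in hand.
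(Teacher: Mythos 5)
Your proposal is correct and follows essentially the same route as the paper's proof: your telescoping identities $\sum_j o_{ij}=s_i-R_i$ and $\sum_i o_{ij}=B-D_j$ are exactly the paper's invariants $s=f(e_{out(O)})+w_H=f(e_{out(O)})+w_V$, the equality case forces one effective ``on'' switch per row and column sums equal to $B$, and the cardinality argument from $B/4<s_i<B/2$ is the same. Your write-up is somewhat more careful (explicitly verifying $u-d=o$ in both states and checking feasibility of the partial column sums), but the underlying argument is identical.
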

\begin{proof}
  Consider the CRN in Figure~\ref{fig:lattice} with $3m \times m$ switch
  nodes. Each column corresponds to one of the $m$ desired subsets of
  the underlying instance of \problem{3PART}, each row corresponds to
  one the $3m$ integer values $s_k$. Note that any assignment of
  ``on'' and ``off'' to switch nodes will split the overall horizontal
  as well as the overall vertical inflow into two parts: a part
  directed to waste material and an output part directed to node
  $O$. Let $w_H$ (resp.\ $w_V$) be the overall horizontally (resp.\
  vertically) produced waste. For any assignment of ``on'' and ``off''
  states to switch nodes $s = f(e_{out(O)}) + w_H = f(e_{out(O)}) + w_V$
  is invariant. Obviously, if $w_H = w_V = 0$, then the outflow
  $f(e_{out(O)})$ to node $O$ is maximal. Furthermore note that at
  most one switch can be in ``on'' state in each row.

  Consider an assignment of ``on'' and ``off'' to the switch nodes that
  corresponds to a solution of the original \problem{3PART} problem. Thus 
  exactly $3m$ switch nodes are in mode ``on'' (three per column
  and one per row). As one switch node per row $i$ is in mode ``on'',
  the outflux $s_i$ of node $Q_i$ flows to output node $O$ and the
  waste produced horizontally in row $i$ is $0$. As this is true for
  all rows, $w_H=w_V=0$ holds and the total flow in the output node
  $O$ is $s$ which is maximal.

  Assume that the flow in the output node is the maximal possible
  value $s$, and therefore $w_H=w_V=0$ holds. This implies that
  exactly one switch node per row needs to be in mode ``on''. As we
  can assume $s/(4m) < s_i <s/(2m)$ exactly 3 switch nodes per column
  need to be in state ``on''.  The overall assignment is therefore a
  solution to the original \problem{3PART} problem.
\end{proof}

Of course, the intermediate network in Figure~\ref{fig:lattice}(A) is not
(yet) an proper CRN. To achieve this goal, we have to replace the switch
nodes by hypergraphs that implement the high-level rule governing their
behavior. 

\subsection*{Implementing switch-nodes}

Suppose the molecules emitted from the $3m$ input nodes are all of
different types $Q_i$, and distinguish the $m$ types of inputs from above
as $Z_j$. Then the switch node $(i,j)$ must implement a net 
reaction of the form
\begin{equation}
  s_i Q_i + s_i Z_j \to s_i O 
\end{equation}
where $O$ is the type of the output molecule. This net reaction can be 
split into four subsequent reactions:
\begin{equation}
  \begin{split}
    s_i Q_i & \to W_i \\
    s_i Z_j & \to V_{ij} \\
    V_{ij} + W_i & \to X_{ij} \\
    X_{ij} & \to s_i O 
  \end{split}
  \label{eq:switch}
\end{equation}
We see that the switch node $(i,j)$ can be in the ``on''-state only if it
received at least $s_i$ copies of the input from the left and a matching
number of input molecules from above. A graphical description of this
partial network is shown in Figure~\ref{fig:lattice}(B).  Since the input
from the left is limited to $s_i$ copies of $Q_i$, either none or a single
molecule of the intermediate $X_{ij}$ is produced, depending on whether
$(i,j)$ is on or not. Clearly, for each $i$, only a single one of the
switches $(i,j)$ can be ``on''.

Note that equ.(\ref{eq:switch}) already provides the necessary device
to complete the proof. If we insist that the CRN may use at most
bi-molecular reactions, we have to find a way to implement the
reactions $s_i Q_i \to W_i$ and $X_{ij} \to s_i O$ by more restricted
elementary reactions. This will the topic of the following
section. According to equ.(\ref{eq:switch}) each diamond node is
replaced by $3(s_i+1)$ vertices, so that the entire network has $6m +
2m + 1 + m\sum_{i=1}^{3m} 3(s_i+1) = 8m + 3sm + 3m^2 + 1$ nodes.
Thus, all instances of \problem{3PART} for which $s=s(m)$ is
polynomially bounded in $m$ can be reduced to a maximum output problem
on an equivalent CRN. We explicitly use the fact that \problem{3PART}
is \emph{strongly} NP-complete: we need that $m$ is polynomially
bounded by the network size $n$ to ensure that $s$, and thus the
reduction to \problem{3PART}, remains polynomial. We know the maximal
outflux of the CRN and can therefore use a simple guess-and-check
argument to show that \problem{MAX-CRN-Output} is in NP. Our
discussion thus establishes
\begin{theorem}
  \problem{MAX-CRN-Output} is strongly NP-complete when the number of
  inputs into the CRN and number of educts in a chemical reaction is
  unrestricted.
\end{theorem}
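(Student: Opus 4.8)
The plan is to establish both halves of NP-completeness: membership in NP and NP-hardness via the polynomial-time reduction from \problem{3PART} already assembled above. Since the two problems share the same underlying combinatorial structure, most of the work has been front-loaded into the switch-node Lemma; the theorem should then follow by packaging that Lemma together with a runtime estimate and a membership argument.

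For membership in NP, I would exploit the fact that the construction fixes the maximum attainable output at exactly $s=\sum_i s_i$, so the decision version merely asks whether a flow with $f(e_{out(O)})\ge s$ exists. A witnessing flow may be assumed free of futile cycles — any such cycle can be stripped without altering inputs or outputs — and is then bounded in magnitude by the capped inputs via mass conservation, so it admits a polynomial-size encoding. Verifying the balance condition \eqref{eq:balance} at every vertex and checking $f(e_{out(O)})\ge s$ is clearly polynomial, hence \problem{MAX-CRN-Output} lies in NP.

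For NP-hardness I would take the reduction $\rho$ to be the construction of Figure~\ref{fig:lattice}, mapping a \problem{3PART} instance to the CRN whose switch nodes are realized by the reactions \eqref{eq:switch} together with the drainage reactions. Two things must be checked. First, $\rho$ must run in polynomial time: the network has $8m+3sm+3m^2+1$ nodes, and here I would invoke the \emph{strong} NP-completeness of \problem{3PART} to guarantee that $s$ grows only polynomially in the (unary) input size, so that the node count — and likewise the reaction count — stays polynomial. Second, $\rho$ must be correct, i.e.\ the \problem{3PART} instance is a yes-instance precisely when the optimal CRN output equals $s$. The forward direction is routine: given a partition, set the corresponding switches to ``on'' and route the flow through \eqref{eq:switch} to realize output $s$.

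The main obstacle is the reverse direction, where I must show that the reaction-level gadget faithfully enforces the abstract ``on/off'' semantics assumed by the Lemma, so that no flow can reach output $s$ except through a genuine partition. The delicate point is that \eqref{eq:switch} lets a switch fire only after it has collected $s_i$ copies of $Q_i$ and $s_i$ copies of $Z_j$; since the left input is capped at $s_i$, at most one $X_{ij}$, and hence exactly $s_i$ units of $O$, can issue from any single switch. I would argue that the drainage reactions absorb every partially consumed input so that the balance condition \eqref{eq:balance} holds for every admissible flow, yet they contribute nothing to $O$; consequently any integer flow decomposes into a well-defined pattern of fully-``on'' and fully-``off'' switches. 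Feeding that pattern back into the Lemma then forces a valid \problem{3PART} solution, and combining this with the guess-and-check membership bound completes the NP-completeness proof.
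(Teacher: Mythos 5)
Your proposal follows essentially the same route as the paper: the same reduction from \problem{3PART} via the lattice of switch nodes realized by the reactions of equ.~\eqref{eq:switch}, the same appeal to \emph{strong} NP-completeness to keep the node count $8m+3sm+3m^2+1$ polynomial, and the same guess-and-check argument for membership in NP (which you flesh out slightly more by noting that a witness flow can be taken cycle-free and polynomially bounded). No substantive differences or gaps.
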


We remark the our CRNs need to have at least two output nodes, one for
the desired product and one to collect all waste products.  

\subsection*{Restriction to bi-molecular reactions}

In this section we show that the problem does not become easier when the
CRN has only a single input and all reactions are bi-molecular. To this end
we further refine the reactions $s_i Q_i\to W_i$, $X_{ij} \to s_i O$. We
will make use of two specialized types of edges that can be implemented
by bi-molecular reactions.

The first type of edge merges exactly $k$ identical molecules into $1$
molecule (the corresponding edges will be referred to as
merge-edges). The second type of edge expands one molecule to exactly
$k$ identical molecules (expansion-edges). We first focus on a
specific type of merge- and expansion-edges: merge-edges of type $(2^u
\to 1)$ can easily be implemented by $u$ subsequent reactions $f^i,
i=1,\ldots,u $ that iteratively create (double-sized) molecules out of
2 identical molecules. Formally, let $I=X_1$ and $O=X_{u+1}$ then
$f^i$ is defined by
\begin{equation}
  2 X_i \to X_{i+1},
\end{equation}
and the corresponding flow is chosen to be $f^i(\{X_i,X_{i+1}\})
:=2^{u-i}$. Symmetrically, expansion-edges of type $(1 \to 2^u)$ can
be implemented by $u$ subsequent reactions that split molecules
repeatedly into two equal molecules. These $(2^u \to 1)$-merge-edges
(resp. $(1 \to 2^u)$-expansion-edges) will in the following be used to
implement the generalized merge- and expansion-edges.

Let $b_{m-1} b_{m-2} \ldots b_{0}$ be the binary representation of $k>0$
with $m=\lfloor \log k \rfloor +1$, and let $B=\{i_1, i_2, \ldots,
i_r\}$ be the indices of all non-zero bits, i.e $i \in B$ with
$b_i=1$. The underlying idea for the merging of $k$ molecules of type
$I$ into 1 molecule of type $O$ is to split the outflow $k$ of $I$
into $r$ individual flows, i.e.  $k=\sum_{j=1}^r 2^{i_j-1}$. We remark
that this representation is unique.  These flows of quantity
$2^{i_j-1}, j=1 \ldots r$ are then individually reduced to flows of
size $1$.  The resulting $r$ flows of quantity 1 are then all merged
to a flow of one molecule of quantity $1$. The implementation of
generalized merge-edges is depicted in Figure
\ref{fig:edges}(A). Expansion-edges that expand the flow of one
molecule of quantity $1$ to a flow of one molecule of quantity $k$ can
be implemented analogously.  First, a flow of quantity $1$ of one
molecule is changed into $r$ flows of quantity $1$, then these $r$
flows are expanded to $r$ flows of quantity $2^{i_j-1}, j=1,\ldots r$,
and then these flows are iteratively summed up. The details are
depicted in Figure \ref{fig:edges}(B).  Clearly, merge and expansion
edges can be employed for the refinement of reactions $s_i Q_i\to
W_i$, $X_{ij} \to s_i O$ in equ.\eqref{eq:switch}.  The number of
additional edges and nodes to implement a $(k\to 1)$ merge-edge is
$O(\log^2 k)$, as there are $O(\log k)$ flows after the split into
individual flows, and each individual flow employs $O(\log k)$ edges
for the $(k \to 1)$ merge (with $k$ being a power of 2). Symmetrically
a $(1\to k)$ expansion-edge uses $O(\log^2 k)$ bi-molecular edges and
additional compounds. Based on this polynomial extension and as all
merge and expansion reactions are bi-molecular, we have the following
\begin{corollary}
  \problem{MAX-CRN(2)-Output} is strongly NP-complete.
\end{corollary}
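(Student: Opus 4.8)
The plan is to reuse the reduction from \problem{3PART} that establishes the preceding theorem and simply substitute the only reactions that violate the bimolecular restriction, namely the merge reactions $s_i Q_i \to W_i$ and $s_i Z_j \to V_{ij}$ and the expansion reaction $X_{ij}\to s_i O$ of equ.\eqref{eq:switch}, by the $(k\to 1)$ merge-edge and $(1\to k)$ expansion-edge gadgets described above (with $k=s_i$). The reaction $V_{ij}+W_i\to X_{ij}$ is already bimolecular and is left untouched. After this substitution every reaction in the network has in-degree and out-degree at most $2$, so the resulting instance is a legal input of \problem{MAX-CRN(2)-Output}, and the reduction itself is just the basic construction followed by this local rewriting.

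Next I would argue that the substitution preserves correctness. The key point is that the merge- and expansion-edges are feed-forward: each internal vertex is produced by a single reaction and consumed by a single reaction, so the balance condition~(\ref{eq:balance}) can be solved layer by layer and fixes the flow through the gadget uniquely once the flow entering it is fixed. Consequently a merge-edge fed with $s_i$ copies of $Q_i$ produces exactly one copy of $W_i$ and nothing else, and symmetrically the expansion-edge turns one copy of $X_{ij}$ into exactly $s_i$ copies of $O$. Thus the net reactions realized by the switch-node gadget are unchanged, the correspondence between ``on''/``off'' assignments and balanced flows carries over, and the attainable outflow $f(e_{out(O)})$ is identical to that of the unrefined network. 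The equivalence of the Lemma therefore transfers verbatim: the refined CRN admits a flow with $f(e_{out(O)})=s$ if and only if the underlying \problem{3PART} instance has a solution.

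It then remains to verify that the reduction stays polynomial and certifies \emph{strong} hardness. Each $(k\to 1)$ merge-edge and each $(1\to k)$ expansion-edge adds only $O(\log^2 k)$ vertices and bimolecular reactions; since every $k$ that occurs is bounded by $s=\sum_i s_i$ and the number of switch nodes is $3m^2$, the total overhead is $O(m^2\log^2 s)$, which added to the $O(sm+m^2)$ vertices of the basic construction is polynomial in the size of the \problem{3PART} instance. Crucially, all newly introduced stoichiometric coefficients equal $2$ and all gadget flow values are bounded by $s$, so no super-polynomially large numbers arise; because \problem{3PART} is strongly NP-complete and $s$ is polynomially bounded in $m$, this is a valid reduction. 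Membership in NP follows exactly as for the preceding theorem: the target optimum $s$ is known in advance, so a guessed flow can be checked against~(\ref{eq:balance}) and against $f(e_{out(O)})=s$ in polynomial time.

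The step I expect to require the most care is the uniqueness-of-flow claim for the gadgets. Since capacities are unrestricted on all internal edges, one must rule out spurious balanced flows that could leak output or let a switch behave as if ``on'' without fully consuming its left input; such a flow would break the invariant $s=f(e_{out(O)})+w_H=f(e_{out(O)})+w_V$ used in the Lemma. The tree-like, feed-forward structure of the merge- and expansion-edges is what makes this tractable, because it forces the balance equations to be solved uniquely from the leaves inward. Nevertheless I would check explicitly that the split of $k=\sum_{j} 2^{i_j-1}$ into its binary parts and the subsequent $(2^u\to 1)$ chains admit no balanced flow other than the intended one, so that the gadget cannot route mass in any unintended way.
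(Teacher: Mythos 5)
Your proposal is correct and follows essentially the same route as the paper: replace the non-bimolecular reactions of equ.~\eqref{eq:switch} by the binary merge- and expansion-edge gadgets, each costing $O(\log^2 k)$ extra bimolecular reactions, and invoke the strong NP-completeness of \problem{3PART} to keep the reduction polynomial. Your explicit inclusion of the reaction $s_i Z_j \to V_{ij}$ among those needing refinement, and your check that the feed-forward gadgets admit only the intended balanced flow, are welcome points of care that the paper leaves implicit, but they do not constitute a different argument.
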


\begin{figure*}
\begin{tabular}{ccc}
\begin{minipage}{0.44\textwidth}
\includegraphics[width=\textwidth]{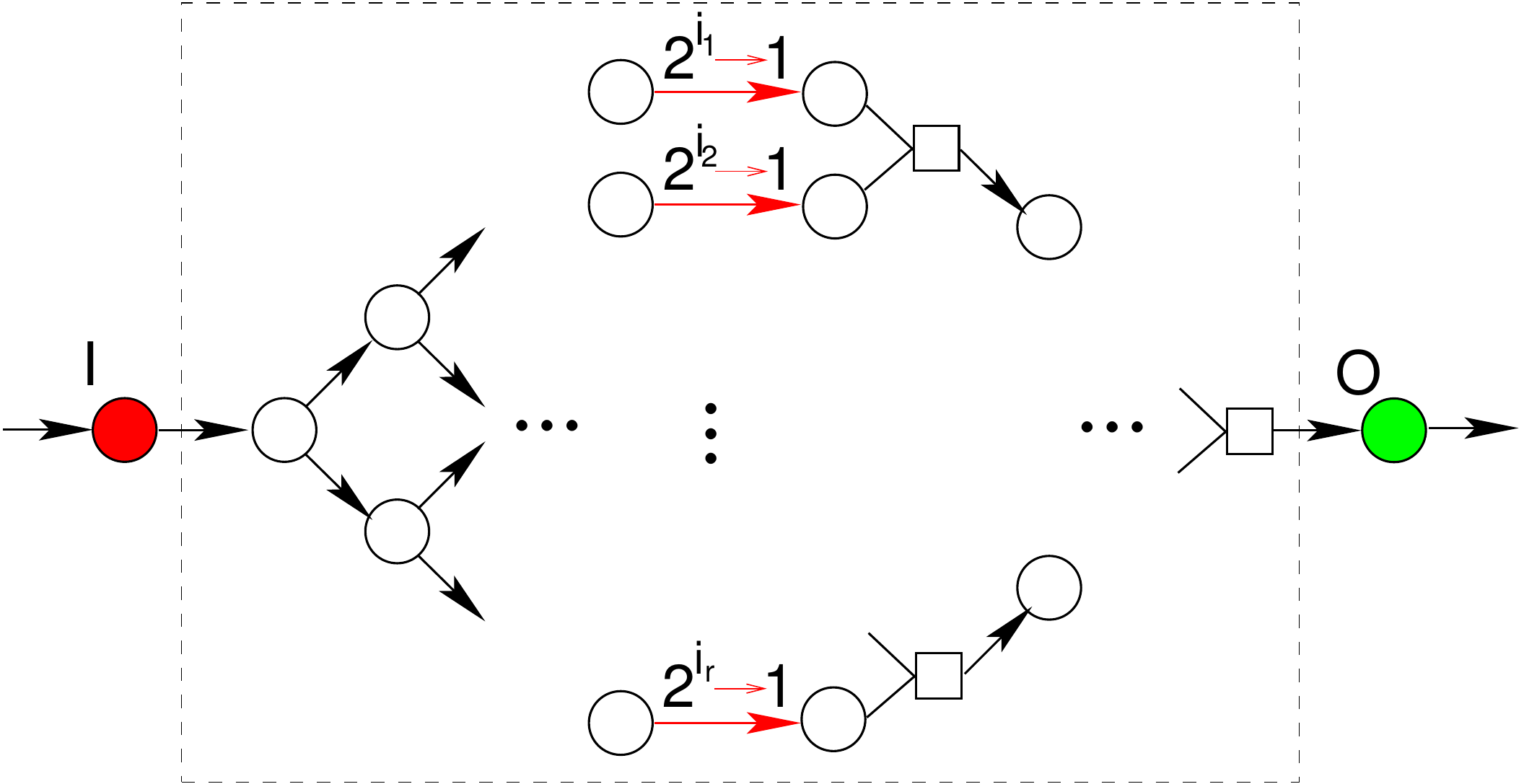}
\end{minipage} &&
\begin{minipage}{0.50\textwidth}
\includegraphics[width=\textwidth]{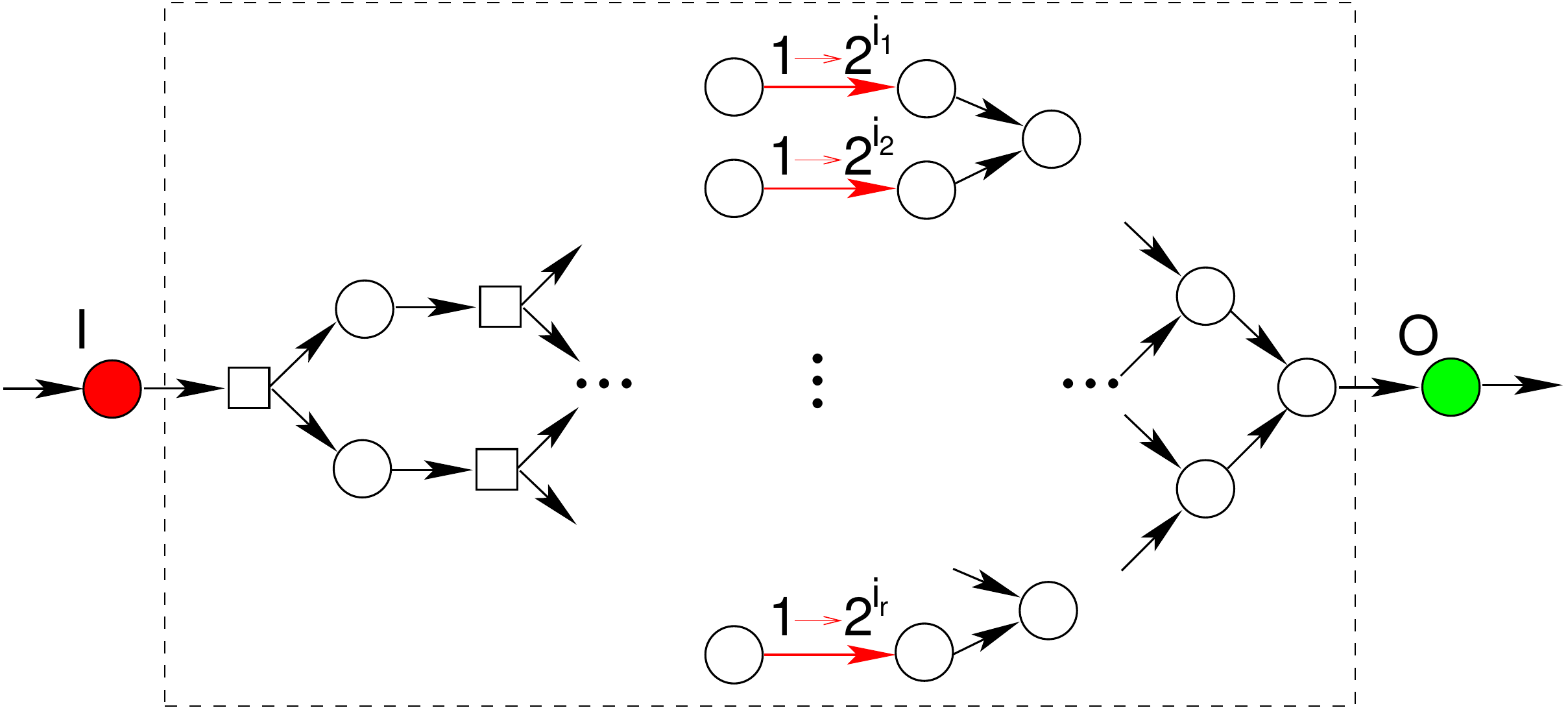}
\end{minipage}\\
(A) && (B)
\end{tabular}
\caption{Consider the binary representation $b_{m-1} b_{m-2} \ldots b_{0}$ of  
  $k>0$ with $m=\lfloor \log k \rfloor +1$. Let $B=\{i_1, i_2, \ldots,
  i_r \}$ be the indices of all non-zero bits, i.e., $i \in B$ with
  $b_i=1$. (A) Implementation of a $(k\to 1)$ merge-edge. (B)
  Implementation of a $(1\to k)$ expansion-edge. The red edges indicate 
  $(2^i\to1)$ merges and $(1\to 2^i)$ expansions, respectively.}
\label{fig:edges}
\end{figure*}

\subsection*{Restriction to a single input} 
To show that \problem{MAX-CRN-Output} is NP-complete even if we have a
single input only, we require an additional edge type that is
implemented by connecting a $(k\to 1)$-merge-edge and a $(1 \to
k)$-expansion edge in series. Such an edge ensures that exactly $k$
(or exactly a multiplicity of $k$) input molecules react to the same
number of output molecules. We will refer to these edges as
$(k)$-force-flow-edges. Note, that such edges do not change the
quantity of a flow. The number of additional edges and nodes required
to implement a $(k)$-force-flow edge is $O(\log^2 k)$.

So far we assumed input nodes $Q_i$ with corresponding influx $s_i$, $i = 1
\ldots, 3m$, plus the $m$ additional input nodes $Z_1, \ldots, Z_m$ with
influx $s = (1/m)\sum_i s_i$ each. In the following we will describe how to
extend the construction of the CRN based on an instance of the
\problem{3PART} problem (\emph{cmp.}\ Figure \ref{fig:lattice}) such that
there is only a single input node. Note that all $s_i$, $m$, and the influx
to nodes $Z_i$ are defined by the given \problem{3PART} instance.

\paragraph{Influx to nodes $Q_i$:}
In the extended CRN the nodes $Q_i$ will be internal nodes with influx
$s_i$. In order to achieve this we will add a single input node $Q$ with
influx $s'$, where $s'$ is the integer representation of the concatenation of
the $r$-bit binary representation of all $s_i$, i.e., 
\begin{equation}
  s'=\sum_{i=1}^{3m} s_i \times 2^{r (i-1)}, \text{~with~} 
  r = \max \{ \lfloor \log s_i \rfloor \} +1
\end{equation} 
Attached to node $Q$ will be a subnetwork that splits the flux $s'$
into the fluxes $s_1, \ldots, s_{3m}$ by iteratively using the last
$r$ bit of the remaining flux as influx to a node $Q_i$, and then
divide the remaining flux by $2^r$. The hypergraph structure to
implement this with bi-molecular reactions only is depicted in Figure
\ref{fig:split}. All dashed lines with red rectangles indicate
force-flow-edges (the number in the rectangle indicates the enforced
flow), all red edges with open arrowheads indicate merge- or
expansion- edges. To enforce that exactly (and not a multiplicity) of
$s_i$ molecules flow towards node $Q_i$, the flow downwards needs to
be maximized. This is done by introducing an additional outflux node:
the flux of quantity $s_{3m}\geq 1$ towards $O'$ is multiplied by a
factor $c$, such that the additional overall non-waste outflux to $O'$
dominates any other non-waste outflux. This can be ensured by choosing
the factor $c$ as the maximal possible influx to $Q$, i.e. $c= 2^{r
  \times 3m}-1$ (the binary representation of $c$ has $r \times 3m$
bit all set to 1). The number of additional edges and nodes is
polynomially bound and the overall outflux of the extended network is
then $s_{3m} \times c + \sum_i{s_i}$. As all outflux can be easily
merged in a binary fashion as applied in the definition of
expansion-edges, the resulting CRN has only a single input node and a
single non-waste output node.

\begin{figure*}
\begin{tabular}{lcr}
\begin{minipage}{0.63\textwidth}
\begin{center}
\input{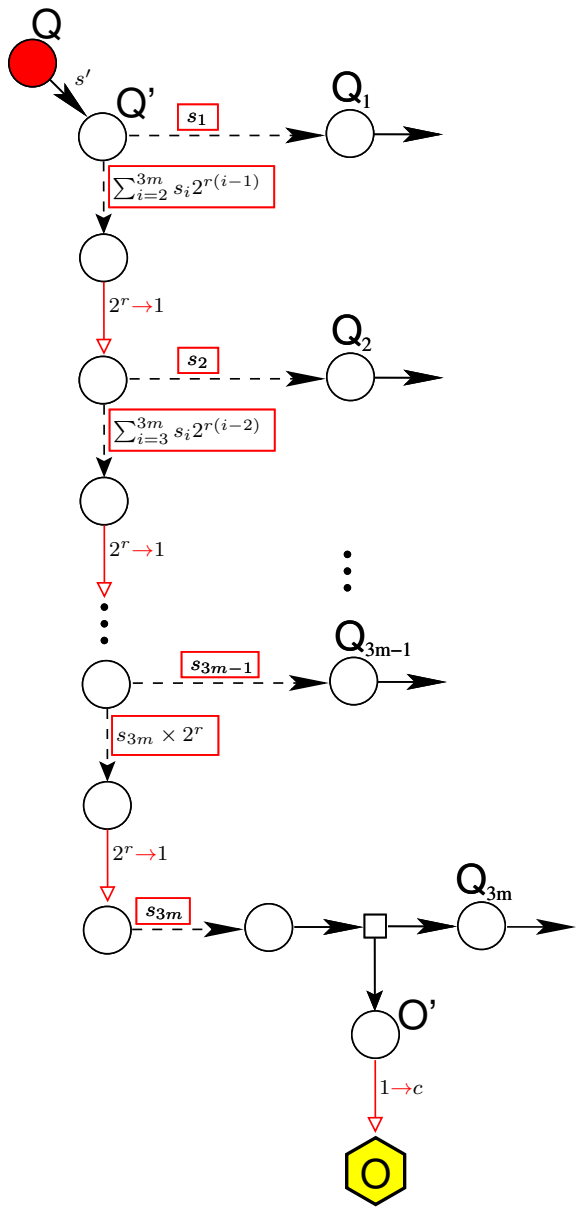}
\end{center}
\end{minipage} & \quad & 
\begin{minipage}{0.3\textwidth}
\caption{Splitting the single influx $s'$ to node $Q'$ such that the influxes
  to the internal nodes $Q_i$ are $s_i$: the influx to node $Q$ is chosen
  to have the quantity $s'=\sum_{i=1}^{3m} s_i \times 2^{r (i-1)}$ with $r
  = \max \{ \lfloor \log s_i \rfloor \} +1$, i.e., $s'$ is determined by the
  concatenation of binary representation of the values $s_i$; force-flow
  edges are depicted as dashed lines labeled with the enforced quantity,
  merge- (resp. expansion-) edges are depicted as red lines with open
  arrowheads labeled the quantification of merging (resp., expansion); the
  constant $c$ for the expansion towards node $O$ is chosen such that the
  outflux in node $O$ dominates the outflux of the original lattice CRN.}
\label{fig:split}
\end{minipage}
\end{tabular}
\end{figure*}

\paragraph{Influx to nodes $Z_i$:}
In order to have nodes $Z_i$ (\emph{cmp.}\ Figure\ \ref{fig:lattice}) as
internal nodes, we split the outflux from node $Q$ of quantity $s'$ in
two fluxes of quantity $s'-1$ and $1$ (by employing force-flow-edges),
that will be directly merged again and be used as influx of quantity
$s'$ to node $Q'$. However, this simple splitting procedure gives a
flux of quantity $1$. This simple flux is easily transformed into $m$
fluxes of quantity $1$, which are then multiplied by $s/m$ using
expansion-edges, and then used as the input towards the internal nodes
$Z_i$.

Recall, that the number of nodes and edges needed for a
force-flow-edge of quantity $k$ is $O(\log^2 k)$. The number of bits for
the maximal flux on any force-flow-edge is $O(r \times 3m)$. As
\problem{3PART} is strongly NP-complete we can assume that all $s_i$
are polynomially bound in $m$, and therefore $r \in O(\log
m)$. Therefore the maximal flux on any edge is $O(2^{m \log m})$. The
number of additional nodes and edges is therefore $O(m^2\log^2 m)$ per
force-flow-edge. As the construction needs $O(m)$ additional
force-flow-edges, the overall number of additional nodes and edges is
$O(m^3 \log^2 m)$. Therefore the following corollary easily follows:
\begin{corollary}
  \problem{MAX-CRN(2)-Output-1} is NP-complete.
\end{corollary}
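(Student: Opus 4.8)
The plan is to build directly on the bimolecular, multi-input result \problem{MAX-CRN(2)-Output}: I already have a CRN, all of whose reactions are bimolecular, whose maximal output equals $s=\sum_i s_i$ exactly when the underlying \problem{3PART} instance is solvable. That network draws on many inputs---an influx $s_i$ at each $Q_i$ and an influx $s/m$ at each $Z_j$ (\emph{cmp.}\ Figure~\ref{fig:lattice}). The whole task is therefore to replace these inputs by a single input node $Q$ together with a bimolecular ``demultiplexer'' sub-CRN that distributes the flux out of $Q$ into the prescribed amounts at the now-internal nodes $Q_i$ and $Z_j$, while changing neither the set of optimal flows on the lattice part nor the polynomiality of the construction.

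First I would assemble the distribution gadget from the edge types already at hand. The merge-edges $(k\to1)$ and expansion-edges $(1\to k)$ are bimolecular and cost $O(\log^2 k)$ nodes and edges each; composing a merge with an expansion gives a force-flow-edge that is forced to carry an integer multiple of $k$ without altering its quantity, again at cost $O(\log^2 k)$. The key idea is to encode the whole input vector in one number: set $s'=\sum_{i=1}^{3m} s_i\,2^{r(i-1)}$ with $r=\max\{\lfloor\log s_i\rfloor\}+1$, so that the $s_i$ occupy disjoint $r$-bit fields of $s'$, and give $Q$ influx $s'$. A chain of $2^r\to1$ merge-edges then peels the low $r$-bit field off the running remainder at each stage and routes it to $Q_i$, dividing the remainder by $2^r$ before passing it on (Figure~\ref{fig:split}). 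The $Z_j$ supply I would obtain by force-flow-splitting $s'$ into an $(s'-1)$-part and a unit part, re-merging the two so that $Q$ still forwards $s'$, then fanning the unit flux into $m$ unit flows and expanding each by the factor $s/m$.

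The step I expect to be the main obstacle is \emph{exactness}: a force-flow-edge only guarantees a multiple of $k$, so I must prevent the demultiplexer from delivering, say, $2s_i$ to some $Q_i$ or from draining flux the wrong way through the peeling chain. My plan is to make the global optimum itself certify the correct split. I would attach an auxiliary output $O'$ that expands the final peeled value $s_{3m}$ by the large constant $c=2^{r\times 3m}-1$, chosen so that the contribution $s_{3m}\times c$ to the total output strictly dominates every flux on the original lattice; then any flow that fails to push the maximal amount down the peeling chain---equivalently, any flow that does not realize the exact split into the fields $s_i$---has strictly smaller total output. Conditioned on the forced split, the lattice sub-CRN is identical to the one underlying \problem{MAX-CRN(2)-Output}, so the extended single-input network attains its (explicitly known) maximal output $s_{3m}\times c+\sum_i s_i$ precisely when the \problem{3PART} instance is a yes-instance. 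Merging all outputs in binary leaves one input and one non-waste output.

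It then remains to check polynomiality and membership. Using that \problem{3PART} is \emph{strongly} NP-complete, the $s_i$ are polynomially bounded, so $r\in O(\log m)$, the largest flux carries $O(m\log m)$ bits, each of the $O(m)$ force-flow-edges contributes $O(m^2\log^2 m)$ nodes and edges, and the distribution gadget adds $O(m^3\log^2 m)$ in total---polynomial in the instance size under binary encoding. Note that the single influx $s'$ is exponentially large in unary, which is exactly why this reduction yields ordinary, rather than strong, NP-completeness. Finally, since the maximal output value is known in closed form, a guessed flow can be verified against the balance conditions~\eqref{eq:balance} in polynomial time, placing the problem in NP and completing the argument.
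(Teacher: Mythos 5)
Your proposal follows essentially the same route as the paper: the same single influx $s'=\sum_i s_i 2^{r(i-1)}$ demultiplexed by a chain of $2^r\to 1$ merges, the same force-flow trick for the $Z_j$ supply, the same dominating auxiliary output with $c=2^{r\times 3m}-1$ to enforce the exact split, and the same $O(m^3\log^2 m)$ size bound. Your explicit remark that the construction yields only ordinary (not strong) NP-completeness because $s'$ is exponential in unary is a correct and welcome clarification of why the paper drops ``strongly'' in this corollary.
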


\subsection*{Autocatalysis} 
The NP-completeness of detecting an autocatalytic species can be shown
by expanding the CRN used for showing the NP-completeness of
\problem{MAX-CRN(2)-Output-1}. Let $O$ be the output node, where a
outflux of $s_{3m} \times c + \sum_i{s_i}$ can be detected iff the
underlying instance of \problem{3PART} is solved. We add a merge-edge
from $O$ towards an additional node $A'$ to create an outflux of
exactly $1$ from $A'$. The CRN is furthermore extended by the
following two additional reactions, where compound $A$ is an input and
an output node of the CRN.
\begin{eqnarray*}
A' + A &\to& 2B\\
B &\to& A
\end{eqnarray*}
The outflux of $A'$ is 1, if and only if 
\begin{enumerate}
\item Compound $A$ cannot be produced from all other source molecules, i.e.,
for all flows $f(e_{in(A)})=0$ implies $f(e_{out(A)})=0$, and
\item two $A$ can be produced if their is an inflow of one $A$, i.e.,
  there is a flow $f$ such that $f(e_{out(A)})>f(e_{in(A)})>0$.
\end{enumerate}
The construction of our reduction highlights the difficult part in
determining autocatalysts. This is not so much finding the autocatalytic
cycle itself but to ensure that the building blocks are provided from the
``food source'' through an in principle arbitrarily complicated
sub-network. 

\section*{Concluding Remarks}

We have shown that the flow maximization problem and the detection of
autocatalytic species in chemical reaction networks are NP-complete
computational problems. As a consequence , we cannot expect to find devise
exact algorithms for these problems that can be used efficiently on large
chemical reaction networks (unless P=NP, which is unlikely at best
\cite{Fortnow:09}). Our results match well with the observation that many
classical computational problems are hard on hypergraphs even though their
analogs for simple graphs admit efficient exact solutions. Illustrative
examples are matching problems \cite{Karp:72}, or the sparsest null space
problem for integer matrices \cite{Coleman:86}, which can be seen as the
natural generalization of the minimum cycle basis problem. As graph models
of chemical networks tend to be oversimplifications that are often of
limited use \cite{Bernal:11}, the hardness of the computational task
associated with the analysis of large reaction networks cannot be avoided.
As exact algorithms appear out of reach, it will be necessary to
systematically explore efficient approximation algorithms and heuristics
for the combinatorial problems naturally arising from Systems Chemistry.


\section*{Authors contributions}
D.M. designed the study. All authors contributed to the results and the
writing of the manuscript and approved the submitted manuscript. 

\section*{Acknowledgements}
  \ifthenelse{\boolean{publ}}{\small}{}

  This work was supported in part by the Volkswagen Stiftung proj.\
  no.\ I/82719, and the COST-Action CM0703 ``Systems Chemistry'' and
  by the Danish Council for Independent Research, Natural Sciences.

{\ifthenelse{\boolean{publ}}{\footnotesize}{\small}
 \bibliographystyle{bmc_article}  
  \bibliography{NPflow} }     


\ifthenelse{\boolean{publ}}{\end{multicols}}{}


\end{bmcformat}
\end{document}